\documentclass[journal]{IEEEtran}
\usepackage{amsmath,amssymb,amsfonts}
\usepackage{algorithmic}
\usepackage{graphicx}
\usepackage{textcomp}
\usepackage{xcolor}
\usepackage{bm}
\usepackage{cite}
\usepackage{tikz}
\usetikzlibrary{fit,tikzmark}
\usepackage{graphicx}
\usepackage{subcaption}

\newtheorem{theorem}{Theorem}

\newtheorem{corollary}{Corollary}

\begin{document}

\title{Two-dimensional Fourier compressed sensing\\ under a fixed readout budget per channel}

\author{Nitin Jonathan Myers,~\IEEEmembership{ Senior Member,~IEEE}\\
Delft Center for Systems and Control, Delft University of Technology
\vspace{-4mm}
}

\maketitle

\begin{abstract}
Recovering sparse signals from their subsampled Fourier representation is an important problem in communications, radar, and imaging. In this letter, we focus on  reconstructing sparse 2D signals (matrices) under the constraint that only a fixed number of entries can be sampled from each channel, e.g., a row or a column in the Fourier domain. For a specified per-channel readout budget, we derive a lower bound on the mutual coherence of the corresponding compressed sensing  matrix. We show that our bound is larger than the classical Welch bound, due to a limited readout budget. We also construct deterministic subsampling patterns that attain this bound for a class of matrix dimensions and readout budgets, and benchmark them against random subsampling through simulations.
\end{abstract}

\begin{IEEEkeywords}
Multi-dimensional sampling, cyclic difference sets, Zadoff-Chu sequences, point spread function, circular shifts
\end{IEEEkeywords}

\section{Introduction}
Sparse signals can be recovered from fewer samples in their Fourier representation using compressed sensing (CS) \cite{donoho2006compressed}. In some 2D-CS settings, hardware constraints impose a readout limit per channel. For example, analog correlator-based radars acquire code-domain measurements using correlator banks \cite{undavalli2025fully, zhou2022band}. In a MIMO analog correlator-based radar, chip-area constraints can limit the number of physical analog correlators per receiver and hence the number of measurements acquired per receive channel \cite{vardakis2026correlator}. Similar per-channel readout limits arise in stepped-frequency radar and imaging systems \cite{mishra2020range,oike2012cmos}.

\par The measurement pattern, called subsampling pattern in CS, directly influences the reconstruction quality. For instance, the subsampling pattern determines the mutual coherence of the CS matrix; lower coherence reduces the upper bound on the reconstruction error \cite{ben2010coherence} with the orthogonal matching pursuit (OMP) \cite{tropp2007signal}. Therefore, prior work on Fourier CS in \cite{applebaum2009chirp,xu2014compressed} has designed subsampling patterns to minimize mutual coherence, whose fundamental limit is given by the Welch bound \cite{welch1967lower}. For CS of 1D signals (vectors), special partial Fourier sensing matrices based on cyclic difference sets \cite{Welch_diff_sets} and almost difference sets \cite{yu2014new, yu2013deterministic} are known to achieve or approach the Welch bound; these matrices exist only for certain combinations of the vector's dimension and the total measurement budget. When such constructions do not exist, numerical methods to minimize coherence-related objectives were developed in \cite{rusu2018algorithms,ariananda2016deterministic}. Recent work has also considered 2D subsampling pattern design using difference sets \cite{campman2017sparse} and data-driven methods \cite{hernandez2023design,bahadir2019learning,ravula2023optimizing, sherry2020learning}. These works, however, do not address the per-channel readout constraint considered in this paper. The closest related work is \cite{kumari2021adaptive}, which designs 2D subsampling patterns under a fixed readout constraint, but only for the special case of one readout per channel.

\par This paper generalizes the construction in \cite{kumari2021adaptive} to the case where multiple measurements can be acquired from each channel. Specifically, we consider the recovery of a $P\times Q$ matrix, whose 2D-discrete Fourier transform (2D-DFT) is sparse, when only $K$ distinct entries can be sampled from each row. Our contributions are listed below.
\begin{itemize}
\item We derive a tight lower bound on the mutual coherence of the CS matrix under a per-channel readout budget and show that this bound is strictly larger than the classical Welch bound for a comparable total measurement budget.
\item We construct deterministic subsampling patterns that attain the proposed mutual-coherence bound for a specified per-channel readout budget. Our construction applies when $P=Q$, $P$ is prime, and $Q-1$ divides $K(K-1)$, as it relies on cyclic difference sets \cite{baumert2006cyclic} and prime-length Zadoff-Chu sequences \cite{chu1972polyphase}.
\item We show, using simulations, that the subsampling patterns constructed in this paper achieve a lower signal reconstruction error than comparable random patterns that satisfy the same per-channel readout budget.
\end{itemize}
\begin{figure}[h]
    \centering
    \vspace{-3mm}

    \begin{minipage}{0.48\linewidth}
        \centering
        \includegraphics[width=0.65\linewidth]{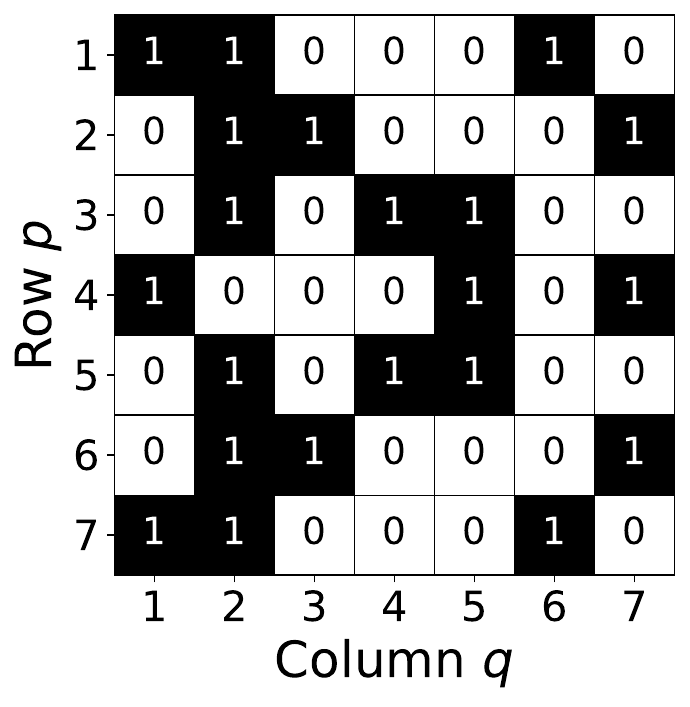}
        \caption*{(a) Readout for a per-row sampling limit of $K=3$.}
    \end{minipage}
    \hfill
    \begin{minipage}{0.48\linewidth}
        \centering
        \includegraphics[width=0.65\linewidth]{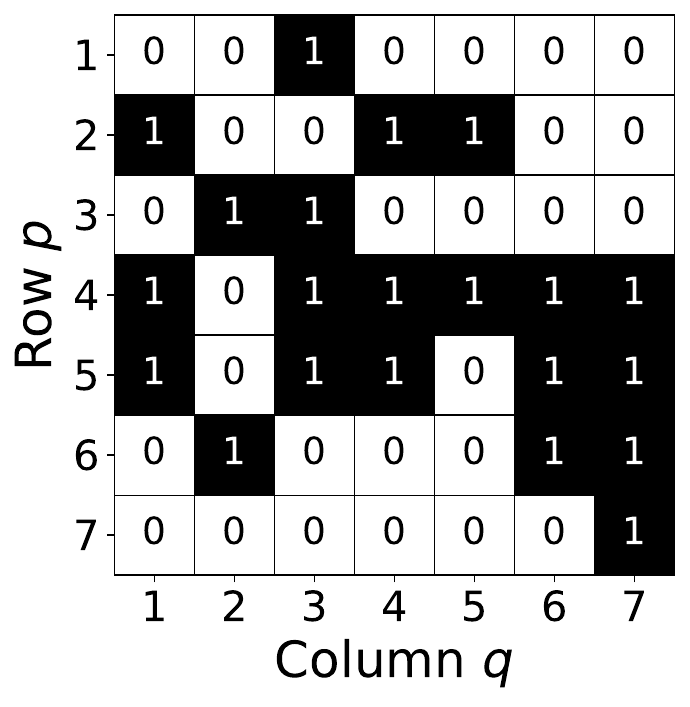}
        \caption*{(b) Unconstrained readout of $M=21$ samples in total.}
    \end{minipage}

    \caption{Examples showing the acquisition of $M=21$ measurements from a $7 \times 7$ matrix. We focus on readout schemes of the type shown in (a), which acquire only $K$ entries from each row, and design subsampling (readout) patterns to achieve the smallest possible mutual coherence for 2D-Fourier CS.}
    \label{fig:sampling_structures}
    \vspace{-5mm}
\end{figure}
\section{2D-Fourier CS under a readout budget}
We consider a sparse signal $\mathbf{X}\in\mathbb{C}^{P \times Q}$, a $P\times Q$ matrix, with $s$ non-zero entries. The 2D-DFT of $\mathbf{X}$ is defined as 
\begin{equation}
    \mathbf{H}= \mathbf{U}_{P}\mathbf{X}\mathbf{U}_Q,
\end{equation}
where $\mathbf{U}_P$ and $\mathbf{U}_Q$ denote unitary DFT matrices of size $P\times P$ and $Q\times Q$. In 2D-Fourier CS, the Fourier representation of the sparse signal is subsampled. The $m^{\mathrm{th}}$ measurement in a sequence of $M$ measurements is defined as  
\begin{align}
\label{eq:gn_Fourier_meas}
    y[m] &= H(r[m], c[m]) + v[m]\\
    &= \mathbf{e}^T_{r[m], P} \mathbf{U}_P \mathbf{X}\mathbf{U}_Q \mathbf{e}_{c[m], Q} + v[m],
\end{align}
where $(r[m], c[m])$ denotes the sampled coordinate on the $P\times Q$ 2D-discrete frequency grid, $v[m]$ is additive noise, and $\mathbf{e}_{k,J}$ is the $k^{\mathrm{th}}$ column of a $J\times J$ identity matrix. The set of sampled coordinates is defined as $\Omega=\{(r[m], c[m])\}_{m=1}^{M}$. The vector $\mathbf{y}$ of $M$ measurements is then  
\begin{align}
\label{eq:Fourier_meas_vec}
    \mathbf{y}&=\mathbf{Ax}+\mathbf{v}, \\
    &=\mathcal{F}_{\mathrm{2D,}\Omega}(\mathbf{X})+\mathbf{v}.
\end{align}
where the $m^{\mathrm{th}}$ row of the CS matrix $\mathbf{A}$ is $\mathbf{A}(m,:)=(\mathbf{U}_Q \mathbf{e}_{c[m],Q})^T \otimes (\mathbf{e}^T_{r[m],P} \mathbf{U}_P)$. Furthermore,  $\mathcal{F}_{\mathrm{2D,}\Omega}(\mathbf{X})$ is the subsampled 2D-DFT operator and $\mathbf{x}=\mathrm{vec}(\mathbf{X})$. CS algorithms estimate $\mathbf{X}$ from $\mathbf{y}$, which contains $M<PQ$ measurements. 
\par We consider a readout constraint that permits the acquisition of only $K$ distinct samples from each row of $\mathbf{H}$. Since $\mathbf{H}$ has $P$ rows, the total number of measurements in our setup is
\begin{equation}
M=PK.
\end{equation}
Fig. \ref{fig:sampling_structures}(a) illustrates a subsampling pattern under the per-channel budget constraint, while Fig. \ref{fig:sampling_structures}(b) shows an unconstrained random pattern for the same total number of measurements. To model the subsampling configuration $\Omega$, we define a binary matrix $\mathbf{B} \in \{0,1\}^{P \times Q}$, where 
\begin{equation}
B(p,q)=
\begin{cases}
1, & \text{if } (p,q)\in\Omega,\\
0, & \text{otherwise}.
\end{cases}
\end{equation}
The 2D-DFT of $\mathbf{B}$, denoted by $\mathcal{F}_{\mathrm{2D}}(\mathbf{B})$, is related to the point spread function ($\mathrm{PSF}$) as \cite{lustig2007sparse}
\begin{align}
\label{eq:PSF_defn}
\mathrm{PSF}&= \sqrt{\frac{Q}{K^2P}}\mathbf{U}_P \mathbf{B}\mathbf{U}_Q.
\end{align}
The $\mathrm{PSF}$ in 2D-DFT CS determines the success of sparse recovery. For example, the first iteration of the OMP identifies the strongest entry in $\mathbf{A}^{\ast}\mathbf{Ax}+\mathbf{A}^{\ast}\mathbf{v}$, which is also equal to $\mathrm{vec}(\mathrm{PSF} \circledast \mathbf{X}) + \mathbf{A}^{\ast}\mathbf{v}$ \cite{myers2019falp}, where $(\cdot)^{\ast}$ denotes the conjugate-transpose and $\circledast$ the 2D-circular convolution. The ideal $\mathrm{PSF}=\mathbf{e}_{1,P}\mathbf{e}^T_{1,Q}$ preserves the entries in $\mathbf{x}$ even after circular convolution, enabling perfect support detection. Such a $\mathrm{PSF}$, however, is  achieved only when $\mathbf{B}$ is a matrix of all ones, equivalently when the readout budget $K=Q$. In practice, $K$ can be much smaller than $Q$ due to hardware limits. For example, a 2D MIMO radar with only $K$ correlators per receiver can acquire only $K$ entries from each row of the 2D virtual-array matrix \cite{sun2015mimo}. When $K\ll Q$, the subsampling matrix $\mathbf{B}$ has only $PK$ number of ones and $\mathrm{PSF}\neq \mathbf{e}_1\mathbf{e}^T_1$. 
\par To minimize the upper bound on the reconstruction error with the OMP \cite{ben2010coherence}, $|\mathrm{PSF}(i,j)|$ must be made as small as possible for all $(i,j) \neq (1,1)$. The mutual coherence of the CS matrix $\mathbf{A}$ is defined as $\mu=\mathrm{max}_{(i,j)\neq (1,1)} |\mathrm{PSF}(i,j)|$ \cite{kumari2021adaptive}. Given a per-channel readout budget, our goal is to determine the subsampling pattern that achieves the smallest $\mu$, i.e.,
\begin{equation}
\label{eq}
\mathcal{O}:\left\{
\begin{aligned}
  \mathbf{B}_{\mathrm{opt}}
  =&\operatorname*{arg\,min}_{\mathbf{B}\in\{0,1\}^{P\times Q}}\;\;\;
    \max_{(i,j)\neq(1,1)}
    \left|\operatorname{PSF}(i,j)\right|  \\
 &\mathrm{s.t.}
  \sum_{q=1}^{Q} B(p,q)=K,\quad p=1,\ldots,P\,.
\end{aligned}
\right.
\end{equation}
In this paper, we derive a lower bound on $\mu$ under the per-channel readout budget and  provide a closed-form construction of $\mathbf{B}_{\mathrm{opt}}$ when $P=Q$, $P$ is prime, and $Q-1$ divides $K(K-1)$. 

\section{Mutual coherence lower bound\\ under a per-channel readout budget}
For an unconstrained CS matrix of size $M\times N$, the lower bound on the mutual coherence is given by the Welch-bound 
$\mu_{\mathrm{wb}}= \sqrt{(N-M)/(MN-M)}$ \cite{xu2014compressed}. This bound depends only on the total number of measurements $M$ and the ambient signal dimension $N$. As $M=PK$ and $N=PQ$ in our setup, the  Welch-bound in our CS problem evaluates to  
\begin{align}
\label{eq:welch_bound_simp}
\mu_{\mathrm{wb}}&= \sqrt{\frac{Q-K}{KPQ-K}}.
\end{align}
A subsampling pattern that achieves this bound corresponds to a $\mathrm{PSF}$ whose largest off-peak magnitude, i.e., the largest value over all indices other than $(1,1)$, is as small as possible.

\par The per-channel readout budget enforces a new constraint on the PSF, due to which the classical Welch bound cannot be attained. To explain this, we construct the $\mathrm{PSF}$ in \eqref{eq:PSF_defn} in two stages: (i) Compute $\tilde{\mathbf{B}}=\mathbf{B}\mathbf{U}_Q$, i.e., the 1D-DFT of $\mathbf{B}$ along its rows and (ii) Compute $\mathrm{PSF}=\sqrt{\frac{Q}{K^2P}} \mathbf{U}_P\tilde{\mathbf{B}}$, which contains the scaled 1D-DFT of the columns in $\tilde{\mathbf{B}}$. As every row of a feasible $\mathbf{B}$ has exactly $K$ ones and $Q-K$ zeros, the first entry of its 1D-DFT is $K/\sqrt{Q}$, i.e., $\tilde{{B}}(p,1)=K/\sqrt{Q}\,\forall p$. Next, as $\tilde{{B}}(p,1)=K/\sqrt{Q}\,\forall p$, the first column in $\mathbf{U}_P \tilde{\mathbf{B}}$ evaluates to  $KP\mathbf{e}_1/\sqrt{QP}$. Therefore, the readout budget forces the first column of $\mathrm{PSF}$ to $(1,0,0,\cdots,0)^T$ for any feasible subsampling configuration $\mathbf{B}$.

\par The configuration within $\mathbf{B}$ determines the entries of the $\mathrm{PSF}$ outside its first column. The mutual coherence $\mu$ is thus the largest magnitude among these entries. We derive a lower bound on $\mu$ by leveraging that the first column of $\mathrm{PSF}$ is $(1,0,0,\cdots,0)^T$ for any feasible readout scheme.
\begin{theorem}[Proposed coherence bound]\label{thm:bound} Under a per-channel readout budget of $K$ entries per row, the mutual coherence of the 2D-DFT CS matrix $\mu$ is lower bounded as 
\begin{equation}\label{eq:bound_prop}
\mu \geq \mu_{\mathrm{const}, K}\ =\ \sqrt{\dfrac{Q-K}{KPQ-KP}}\;.
\end{equation}
\end{theorem}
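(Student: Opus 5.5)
Parseval's theorem applied to the $\mathrm{PSF}$ gives the proof: compute its total energy, subtract the energy that the readout budget pins down, and average the remainder over the entries that are not forced to zero.

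\textbf{Step 1: total energy.} Since $\mathbf{U}_P$ and $\mathbf{U}_Q$ are unitary,
\[
\|\mathrm{PSF}\|_F^2=\frac{Q}{K^2P}\|\mathbf{B}\|_F^2 .
\]
The binary matrix $\mathbf{B}$ contains exactly $PK$ ones, so $\|\mathbf{B}\|_F^2=PK$ and therefore $\|\mathrm{PSF}\|_F^2=Q/K$.

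\textbf{Step 2: energy that is forced.} By the observation preceding the theorem, every feasible $\mathbf{B}$ makes the first column of $\mathrm{PSF}$ equal to $(1,0,\dots,0)^T$. This column contributes energy exactly $1$, all of it at the peak $(1,1)$. The off-peak entries in the first column are identically zero, so they carry no energy. The remaining energy $Q/K-1=(Q-K)/K$ is therefore spread over the $P(Q-1)$ entries outside the first column.

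\textbf{Step 3: averaging.} The maximum of a set of nonnegative numbers is at least their mean. Applying this to the squared magnitudes of the entries outside the first column gives
\[
\mu^2\ \ge\ \max_{j\ge 2}|\mathrm{PSF}(i,j)|^2\ \ge\ \frac{1}{P(Q-1)}\sum_{i}\sum_{j\ge2}|\mathrm{PSF}(i,j)|^2=\frac{Q-K}{KP(Q-1)} .
\]
Taking square roots yields $\mu\ge\sqrt{(Q-K)/(KPQ-KP)}$, which is \eqref{eq:bound_prop}. Equality holds exactly when all entries outside the first column have the same magnitude.

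\textbf{Where the gain over the Welch bound comes from.} The argument is the Welch-bound calculation with one change: the $P-1$ off-peak entries of the first column are known to be zero, so they are removed from the average. Averaging over $P(Q-1)$ entries instead of $PQ-1$ produces the larger bound, since $KPQ-KP<KPQ-K$ when $P>1$. The only points that need care are the normalization in Step 1 (the scaling $\sqrt{Q/(K^2P)}$ combined with $\|\mathbf{B}\|_F^2=PK$) and the fact that the forced zero column holds for every feasible $\mathbf{B}$, which the text before the theorem already establishes.
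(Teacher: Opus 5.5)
Your proof is correct and follows the paper's argument: Parseval's theorem, the readout-forced first column $(1,0,\dots,0)^T$, and bounding the leftover energy $(Q-K)/K$ by $P(Q-1)\mu^2$. The only difference is cosmetic: you normalize the energy directly at the PSF level ($\|\mathrm{PSF}\|_F^2=Q/K$), whereas the paper writes Parseval for $\mathbf{B}$ and $\mathcal{F}_{\mathrm{2D}}(\mathbf{B})$ and rescales afterward.
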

\begin{IEEEproof}
Our proof uses Parseval's theorem, which equates the total energy in $\mathbf{B}$, i.e., $\Vert \mathbf{B}\Vert^2_{\mathrm{F}}=KP$, to that of its unitary 2D-DFT $\mathcal{F}_{\mathrm{2D}}(\mathbf{B})$. Note that $\mathrm{PSF}=\sqrt{Q/(K^2P)}\mathcal{F}_{\mathrm{2D}}(\mathbf{B})$. We know that the $(1,1)^{\mathrm{th}}$ entry in $\mathcal{F}_{\mathrm{2D}}(\mathbf{B})$, i.e., the DC-component is $KP/ \sqrt{PQ}$. Furthermore, the remaining $P-1$ entries in the first column of $\mathcal{F}_{\mathrm{2D}}(\mathbf{B})$ are zero. We apply Parseval's theorem to write 
\begin{align}
    \Vert\mathbf{B}\Vert^2_{\mathrm{F}}&=\left(\frac{KP}{\sqrt{PQ}}\right)^2\!\!\!+\sum^{P}_{p=2}0^2\! +\! \sum_{p=1}^P\sum_{q=2}^Q \left|[\mathcal{F}_{\mathrm{2D}}(\mathbf{B})]_{p,q}\right|^2\\
    \Rightarrow &KP=\frac{K^2P}{Q}+ \frac{K^2P}{Q} \sum_{p=1}^P\sum_{q=2}^Q |\mathrm{PSF}(p,q)|^2\\
    \Rightarrow &KP \overset{(a)}{\leq} \frac{K^2P}{Q}+ \frac{K^2P}{Q} P(Q-1)\mu^2\\
    \Rightarrow &\mu \geq \sqrt{\dfrac{Q-K}{KPQ-KP}}.
\end{align}
Here, $(a)$ follows from the fact that $\mu=\mathrm{max}_{(i,j)\neq (1,1)} |\mathrm{PSF}(i,j)|$. 
\end{IEEEproof}
 \begin{corollary}\label{cor:strictly_higher} For $K<Q$, our coherence bound under the per-channel readout is strictly larger than the Welch bound for the same total number of samples. To see this, we take the ratio between the bound in \eqref{eq:bound_prop} and the Welch bound in \eqref{eq:welch_bound_simp}: \begin{equation} \label{eq:mu_bound_ratio} \frac{\mu_{\mathrm{const},K}}{\mu_{\mathrm{wb}}} = \sqrt{\frac{PQ-1}{PQ-P}} . \end{equation} Since $P>1$ in a 2D-CS setting, we have $PQ-1>PQ-P$, and hence $\mu_{\mathrm{const},K}/\mu_{\mathrm{wb}}>1$. Thus, the per-channel readout constraint raises the lower bound on the smallest achievable coherence relative to unconstrained readout. 
 \end{corollary}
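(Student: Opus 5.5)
The plan is to verify the identity \eqref{eq:mu_bound_ratio} by direct algebra from the two closed forms already available: the proposed bound in \eqref{eq:bound_prop} (Theorem~\ref{thm:bound}) and the Welch bound specialised to $M=PK$, $N=PQ$ in \eqref{eq:welch_bound_simp}. Both bounds share the numerator $Q-K$ under the square root, and that factor is strictly positive precisely because $K<Q$. It therefore cancels in the ratio, which gives
\begin{equation*}
\frac{\mu_{\mathrm{const},K}^2}{\mu_{\mathrm{wb}}^2}=\frac{(Q-K)/(KPQ-KP)}{(Q-K)/(KPQ-K)}=\frac{K(PQ-1)}{K(PQ-P)}=\frac{PQ-1}{PQ-P}.
\end{equation*}
Taking square roots is legitimate since both bounds are nonnegative reals. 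The hypothesis $K<Q$ is used exactly here: if $K=Q$, both bounds are zero and the ratio is undefined.

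Next I will check that the denominators are positive, so that the ratio is well defined and larger than one. We have $PQ-P=P(Q-1)>0$ whenever $Q\geq 2$, and this holds automatically because $1\le K<Q$. The strict inequality $PQ-1>PQ-P$ is equivalent to $P>1$, which is the standing assumption of a genuinely two-dimensional setting. Since the square root is increasing, it follows that $\mu_{\mathrm{const},K}/\mu_{\mathrm{wb}}>1$, i.e.\ $\mu_{\mathrm{const},K}>\mu_{\mathrm{wb}}$.

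Finally, I will interpret the result. Any feasible $\mathbf{B}$ satisfies $\mu\ge\mu_{\mathrm{const},K}$ by Theorem~\ref{thm:bound}. The strict gap then shows that the Welch bound cannot be attained under the per-channel constraint, whereas it is not excluded a priori for unconstrained readout with the same $M$. There is no real obstacle in this argument. The only care needed is to track the positivity conditions ($K<Q$, $Q\ge2$, $P>1$) so that the cancellation and division are valid.
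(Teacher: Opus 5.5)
Your proposal is correct and follows the paper's argument: cancel the common factor $Q-K$ (and $K$) in the ratio of \eqref{eq:bound_prop} to \eqref{eq:welch_bound_simp} to get $\sqrt{(PQ-1)/(PQ-P)}$, then use $P>1$ to conclude the ratio exceeds one. Your explicit tracking of the positivity conditions ($1\le K<Q$, hence $Q\ge 2$ and nonzero numerators and denominators) is a careful addition that the paper leaves implicit.
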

 \begin{corollary}\label{cor:square_geom_asymptotics} For CS of square matrices, i.e., $P=Q$, our bound approaches the unconstrained Welch bound as $P\rightarrow\infty$. To prove this, we substitute $P=Q$ in \eqref{eq:mu_bound_ratio} to write \begin{equation} \frac{\mu_{\mathrm{const},K}}{\mu_{\mathrm{wb}}} = \sqrt{\frac{P^2-1}{P^2-P}} = \sqrt{\frac{P+1}{P}}, \end{equation} which converges to $1$ as $P\rightarrow\infty$. Therefore, for recovery of square matrices, the coherence bound penalty introduced by the per-channel readout budget vanishes asymptotically.\end{corollary}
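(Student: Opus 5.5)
The plan is to work entirely with the ratio already computed in Corollary~\ref{cor:strictly_higher}, rather than with the two bounds separately. Both $\mu_{\mathrm{const},K}$ from Theorem~\ref{thm:bound} and $\mu_{\mathrm{wb}}$ from \eqref{eq:welch_bound_simp} tend to zero as $P\to\infty$. A statement that their difference vanishes would therefore be trivial, so ``approaches the Welch bound'' should be read as the ratio tending to $1$. A useful feature of \eqref{eq:mu_bound_ratio} is that the ratio $\mu_{\mathrm{const},K}/\mu_{\mathrm{wb}}=\sqrt{(PQ-1)/(PQ-P)}$ does not depend on $K$. Hence the asymptotic claim holds uniformly in the readout budget: $K$ may be fixed or may grow with $P$, subject only to $1\le K<Q$ so that both bounds are well defined and nonzero.

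The first step substitutes $Q=P$ into \eqref{eq:mu_bound_ratio}, giving $\sqrt{(P^2-1)/(P^2-P)}$. The second step factors the numerator and denominator as $(P-1)(P+1)$ and $P(P-1)$. Since $P>1$ in the 2D setting (and $P\ge 2$ whenever $K<Q=P$ is possible), the common factor $P-1$ is nonzero and cancels. This leaves $\sqrt{(P+1)/P}=\sqrt{1+1/P}$.

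The final step takes the limit. We have $1+1/P\to 1$, and the square root is continuous at $1$, so the ratio converges to $1$. For a quantitative version, one can use
\begin{equation*}
1<\sqrt{1+1/P}\le 1+\tfrac{1}{2P},
\end{equation*}
which shows that the penalty relative to the Welch bound decays like $O(1/P)$. The lower inequality is consistent with Corollary~\ref{cor:strictly_higher}.

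There is no genuine obstacle here; the algebra is routine. The only point needing care is interpretive: stating the convergence as a ratio, and noting that the cancellation requires $P\neq 1$.
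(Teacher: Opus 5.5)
Your proposal is correct and follows the same route as the paper: substitute $P=Q$ into the ratio \eqref{eq:mu_bound_ratio}, cancel the common factor $P-1$ to get $\sqrt{(P+1)/P}$, and let $P\to\infty$. Your additions are valid refinements that the argument does not need: reading ``approaches'' as the ratio tending to $1$, noting uniformity in $K$, and giving the $O(1/P)$ rate via $\sqrt{1+1/P}\le 1+\tfrac{1}{2P}$.
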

\section{Constructions achieving the smallest mutual \\ coherence under a per-channel readout budget}\label{sec:constr}
\par To explain our construction, we consider $\tilde{\mathbf B}=\mathbf B\mathbf U_Q$, which appears in the $\mathrm{PSF}$. Here, $\tilde{\mathbf B}$ is obtained by applying a $Q$-point DFT to each row of $\mathbf B$. Since every row of a feasible $\mathbf B$ must contain exactly $K$ ones and $Q-K$ zeros, we impose a circular-shift structure on $\mathbf B$: each row is chosen as a circularly shifted version of a common $1\times Q$ binary vector $\mathbf b$ with exactly $K$ ones. This restriction reduces the design space, but it facilitates our  construction by reducing the design variables to $\mathbf b$ and the circular shifts across rows.

\par The circular-shift relationship across the rows of $\mathbf B$ induces a special structure in its 2D-DFT, equivalently the $\mathrm{PSF}$. To explain this, we define the $p^{\mathrm{th}}$ row of $\mathbf{B}$ as the $r[p]$ right circular shift of the $1\times Q$ vector $\mathbf{b}$. The 1D-DFT of $\mathbf{b}$ is defined as $\tilde{\mathbf{b}}=\mathbf{b} \mathbf{U}_Q$. As circularly shifting a vector induces linear phase modulation in the DFT \cite{oppenheim1999discrete}, it follows that the $p^{\mathrm{th}}$ row within $\tilde{\mathbf{B}}= \mathbf{B} \mathbf{U}_Q$ in our construction is
\begin{align}
    \!\!\tilde{\mathbf{B}}(p,:) \!&=\! \tilde{\mathbf{b}} \odot [1, e^{-j\frac{2\pi r[p]}{Q}}, e^{-j\frac{4\pi r[p]}{Q}}, \cdots e^{-j\frac{2\pi r[p](Q-1)}{Q}}],
\end{align}
where $\odot$ denotes the Hadamard product. The matrix $\tilde{\mathbf{B}}$ is then
\begin{equation}
\label{eq:Btilde_basic}
\tilde{\mathbf{B}} \!=\!\!
    \begin{bmatrix}
1 & \tikzmarknode{g1}{e^{-j\frac{2\pi r[1]}{Q}}} & e^{-j\frac{4\pi r[1]}{Q}} 
& \cdots & e^{-j\frac{2\pi r[1](Q-1)}{Q}} \\
1 & \tikzmarknode{g2}{e^{-j\frac{2\pi r[2]}{Q}}} & e^{-j\frac{4\pi r[2]}{Q}} 
& \cdots & e^{-j\frac{2\pi r[2](Q-1)}{Q}} \\
\vdots & \tikzmarknode{gvdots}{\vdots} & \vdots & \ddots & \vdots \\
1 & \tikzmarknode{gP}{e^{-j\frac{2\pi r[P]}{Q}}} & e^{-j\frac{4\pi r[P]}{Q}} 
& \cdots & e^{-j\frac{2\pi r[P](Q-1)}{Q}}
\end{bmatrix} \!\!\mathrm{diag}(\tilde{\mathbf{b}})
\begin{tikzpicture}[overlay,remember picture]
\node[
    draw,
    dashed,
    rounded corners,
    inner xsep=3pt,
    inner ysep=3pt,
    fit=(g1)(gP)
] (gbox) {};
\node at ([yshift=-7pt]gbox.south) {$\mathbf g$};
\end{tikzpicture}.
\end{equation}
The vector containing the twiddle factors induced by row-wise circular shifts in our construction is defined as
\begin{equation}
    \mathbf{g}
=
\left(
e^{-j\frac{2\pi r[1]}{Q}},
e^{-j\frac{2\pi r[2]}{Q}},
\ldots,
e^{-j\frac{2\pi r[P]}{Q}}
\right)^T .
\end{equation}
Using this definition, \eqref{eq:Btilde_basic} can be written in compact form as
\begin{equation}
\label{eq:Btilde_simp}
\tilde{\mathbf{B}}
=
[\mathbf{g}^0,\mathbf{g}^1,\ldots,\mathbf{g}^{Q-1}]
\operatorname{diag}(\tilde{\mathbf{b}}),
\end{equation}
where $\mathbf{g}^q$ is the vector obtained by raising each element of
$\mathbf{g}$ to the power $q$. Now, $\mathcal{F}_{\mathrm{2D}}(\mathbf{B})
=
\mathbf{U}_P\tilde{\mathbf{B}}$ can be written as
\begin{equation}
\label{eq:F2D_simplified}
\mathcal{F}_{\mathrm{2D}}(\mathbf{B})
=
[\mathbf{U}_P\mathbf{g}^0,\mathbf{U}_P\mathbf{g}^1,\ldots,
\mathbf{U}_P\mathbf{g}^{Q-1}]
\operatorname{diag}(\tilde{\mathbf{b}}).
\end{equation}
From \eqref{eq:F2D_simplified}, we observe that the $q$th column of
$\mathcal{F}_{\mathrm{2D}}(\mathbf{B})$, equivalently of the scaled $\mathrm{PSF}$,
is the 1D-DFT of $\mathbf{g}^{q-1}$ multiplied by $\tilde{b}[q]$.  

\par To obtain a flat $\mathrm{PSF}$ magnitude outside the first column, the columns of $\mathcal{F}_{\mathrm{2D}}(\mathbf B)$ indexed by $q\in\{2,3,\ldots,Q\}$ must satisfy two conditions. First, these columns must have the same $\ell_2$ norm. Second, the energy within each of these columns must be distributed uniformly across the $P$ rows. Under our circular-shift construction, \eqref{eq:F2D_simplified} shows that the $q$th column of $\mathcal{F}_{\mathrm{2D}}(\mathbf B)$ is $\tilde b[q]\mathbf U_P\mathbf g^{q-1}$, whose $\ell_2$ norm is $ |\tilde b[q]|\left\|\mathbf U_P\mathbf g^{q-1}\right\|_2 = |\tilde b[q]|\sqrt{P}$. This follows from the fact that $\mathbf U_P$ is unitary and that the entries of $\mathbf g$ have unit modulus. Therefore, the first condition is satisfied by choosing $\mathbf b$ such that the DFT coefficients $\{\tilde b[q]\}_{q=2}^{Q}$ have equal magnitude. Once such a $\mathbf b$ is fixed, the second condition can be addressed by designing $\mathbf g$, or equivalently the circular shifts $\{r[p]\}_{p=1}^{P}$, so that $\mathbf U_P\mathbf g^{q-1}$ has a flat magnitude profile $\forall q \in\{2,3,\ldots,Q\}$.

\par We discuss how to construct $\mathbf b$ so that its non-DC DFT magnitude is flat. Let $\mathcal S=\{s_1,\ldots,s_K\}$ be the support of $\mathbf b$, i.e., $b[i]=1$ if $i\in\mathcal S$ and $0$ otherwise. Then, its DFT entry
\begin{equation} 
\tilde b[q] =\frac{1}{\sqrt{Q}}\sum_{k=1}^{K} e^{-j\frac{2\pi(q-1)s_k}{Q}} . 
\end{equation} 
This is the same sum in \cite[Eq.~(10)]{Welch_diff_sets}, up to normalization and conjugation. If $\mathcal S$ is a cyclic difference set, then \cite{Welch_diff_sets}
\begin{equation} |\tilde b[q]| = \sqrt{\frac{K(Q-K)}{Q(Q-1)}}, \qquad \forall q \in \{2,\ldots,Q \}. 
\end{equation}
A cyclic $(Q,K,\lambda)$ difference set is a $K$-element subset in  $\mathbb Z_Q=\{1,2,\cdots Q\}$ such that the cyclic differences
$(s_a-s_b)\bmod Q $ $\forall s_a,s_b\in\mathcal S$, contain every nonzero element of $\mathbb Z_Q$ exactly $\lambda$ times \cite{baumert2006cyclic}. Thus, setting $\mathbf b$ to an indicator on a cyclic difference set ensures equal-magnitude non-DC DFT coefficients. Since such sets can exist only if $K(K-1)=\lambda(Q-1)$, our construction requires $Q-1$ to divide $K(K-1)$.

\par The design of circular shifts $\{r[p]\}_{p=1}^{P}$ so that $\mathbf U_P\mathbf g^{q-1}$ has a flat magnitude profile $\forall q\in\{2,3,\ldots,Q\}$ was studied in \cite[Sec. IV]{kumari2021adaptive}. The key idea in \cite{kumari2021adaptive} is to set $\mathbf g$ to a Zadoff-Chu (ZC) sequence \cite{chu1972polyphase}, whose DFT has a constant magnitude. Moreover, for a prime-length ZC sequence $\mathbf{g}$ with root $1$, its exponent $\mathbf g^{q-1}$ also has constant DFT magnitude $ \forall q \in \{2,\ldots,Q \}$. Setting $\mathbf{g}$ to such a ZC sequence gives
\begin{equation}
r[p]=\left[\frac{p(p-1)}{2}\right]_{\mathrm{mod}\,P},
\qquad p=1,2,\ldots,P .
\label{eq:ZC-shifts}
\end{equation}
The design in \cite{kumari2021adaptive} assumes $P=Q$ with $P$ prime (as in our design), due to the use of prime-length ZC codes for $\mathbf{g}$. Finally, our method relaxes the $K=1$ restriction with \cite{kumari2021adaptive} by using a cyclic difference set-based $\mathbf{b}$. An example of our construction for $(P,Q,K)=(7,7,3)$ is shown in Fig. \ref{fig:sampling_structures}(a).

\section{Numerical Results}
\begin{figure*}[t]
    \centering
    \begin{minipage}{0.3\textwidth}
        \centering
        \includegraphics[width=\linewidth,trim={0cm 0.25cm 0cm 0.8cm},clip]{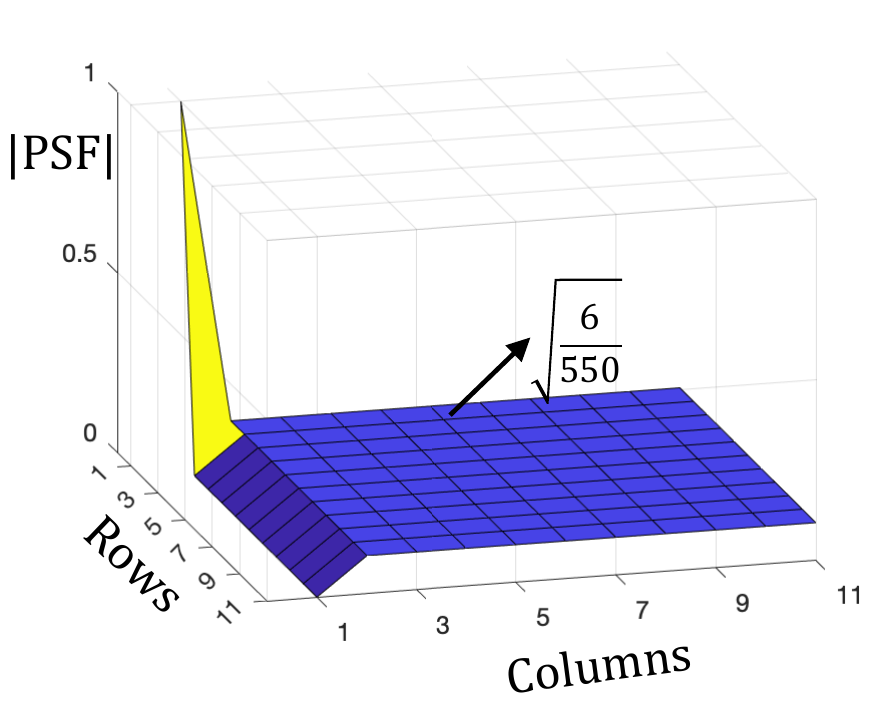}
        \caption*{(a) PSF magnitude for $11\times 11$, $\,K=5$}
    \end{minipage}
    \hfill
    \begin{minipage}{0.3\textwidth}
        \centering
        \includegraphics[width=\linewidth,trim={1cm 6.5cm 1cm 7cm},clip]{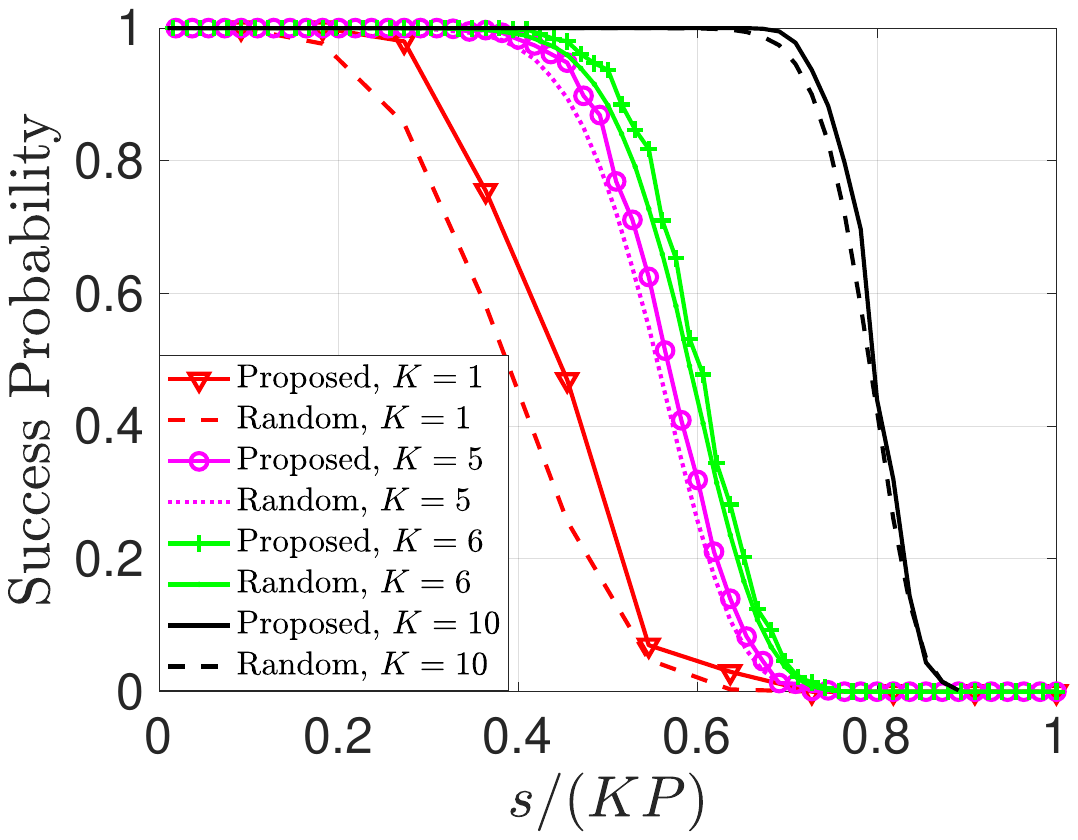}
        \caption*{(b) Signal recovery rate for $11\times 11$.}
    \end{minipage}
    \hfill
    \begin{minipage}{0.3\textwidth}
        \centering
        \includegraphics[width=\linewidth,trim={1cm 6.5cm 1cm 7cm},clip]{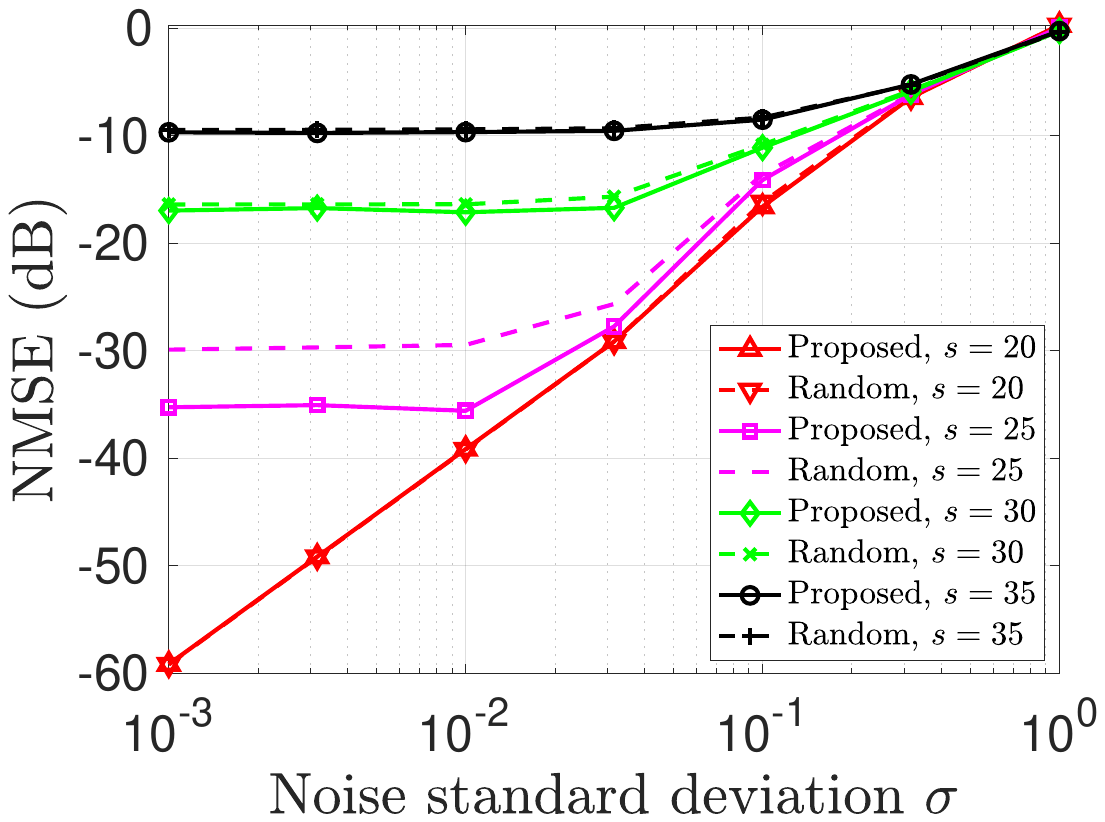}
        \caption*{(c) NMSE vs $\sigma$ for $11 \times 11$, $K=5$.}
    \end{minipage}
    \caption{(a) Our construction for $\mathbf{B}$ yields a PSF that attains the bound in \eqref{eq:bound_prop}. Here, $P=Q=11$ and $K=5$. (b) In a noiseless setting, i.e., $\sigma=0$, subsampling the Fourier representation according to our construction results in better sparse recovery than a comparable feasible random scheme, especially for $K=1$. (c) For low noise standard deviation $\sigma$, subsampling-induced artifacts dominate the sparse recovery performance, and our construction achieves a clear improvement for intermediate sparsity levels, i.e., around $s=25$. For a small number of non-zeros coefficients $s$, recovery is easy for both schemes, while for a large $s$, recovery is difficult for both. Finally, for a large $\sigma$, the gap between the proposed and random schemes diminishes.}
    \label{fig:Simulation_results}
\end{figure*}
\par We consider a 2D-CS problem of recovering a sparse matrix of size $11 \times 11$, i.e., $P=Q=11$, from compressed Fourier measurements acquired under a per-row readout budget constraint. We generate $1000$ random realizations of an $s$-sparse matrix $\mathbf{X}$, where the support indices are drawn uniformly at random from the $11 \times 11$ grid. The $s$-nonzero coefficients are drawn according to the random fading model $(0.5+\delta^2)e^{j\phi}$ \cite{tang2013compressed}, where $\delta\sim\mathcal{N}(0,1)$ controls the amplitude and $\phi\sim\mathcal{U}[0,2\pi)$ is an independent random phase. In this setup, our construction yields subsampling patterns only for $K \in \{1,5,6,10\}$, since cyclic $(Q,K,\lambda)$ difference sets exist only for these values of $K$ when $Q=11$ \cite{baumert2006cyclic}. For illustration, we show $|\mathrm{PSF}|$ corresponding to our construction for $K=5$ in Fig. \ref{fig:Simulation_results}(a). We observe that the $\mathrm{PSF}$  has a flat magnitude outside the first column and it attains our bound  in \eqref{eq:bound_prop}.   

\par In our simulations, sparse recovery is performed using the OMP for a known sparsity $s$. We compare two subsampling schemes: (i) the proposed deterministic scheme, which selects entries from cyclically shifted difference-set locations across rows, with the cyclic shifts defined in \eqref{eq:ZC-shifts}, and (ii) a random scheme, where $K$ entries are chosen uniformly at random from each row and independently across rows. For each realization of $\mathbf{X}$, reconstruction with the random scheme is performed for $50$ independent random subsampling configurations satisfying the same per-row budget constraint. To study the advantage of the proposed design over the random scheme, we first consider a noiseless setting. We report 
the probability of successful signal recovery, where recovery is declared successful if $\Vert \mathbf{X}-\hat{\mathbf{X}}\Vert_{\mathrm{F}}/\Vert \mathbf{X}\Vert_{\mathrm{F}}<10^{-3}$. This result, shown in Fig. \ref{fig:Simulation_results}(b), is sketched against the sparsity ratio $s/(PQ)$ for different readout budgets. We observe that our construction outperforms random subsampling, with a particularly significant improvement when the readout budget is small, i.e., $K=1$. As $K$ increases, the gap diminishes because random subsampling is less likely to produce structured patterns that result in a large mutual coherence $\mu$. For example, unfavorable subsampling configurations, such as sparse repetitive readout across rows which induces high mutual coherence, become unlikely for a large $K$. Although random subsampling under a per-channel budget achieves comparable performance for larger readout budgets, i.e., $K\in\{5,6,10\}$, the proposed construction for $\mathbf{B}$ provides a deterministic alternative that is guaranteed to attain our mutual coherence bound in \eqref{eq:bound_prop}. 

\par Finally, we consider a noisy setting and compare the normalized mean squared error (NMSE) in the sparse estimate using the designed and random subsampling configurations. Here, the compressed measurements in \eqref{eq:Fourier_meas_vec} are perturbed by Gaussian noise, with $\mathbf{v}$ drawn from $\mathcal{CN}(0,\sigma^2)$. The stopping rule in the OMP terminates the algorithm when the residue's norm falls below $\sqrt{M}\sigma$ or when $M$ iterations are reached, whichever occurs first. Fig. \ref{fig:Simulation_results}(c) shows that, at low $\sigma$ and an intermediate $s\approx 25$, the NMSE with the proposed design is lower than that associated with feasible random subsampling under the same per-row readout budget. In this regime, subsampling-induced aliasing ($\mathrm{PSF}\circledast \mathbf{X}$) dominates Gaussian noise ($\mathbf{A}^{\ast}\mathbf{v}$), thereby making the benefit of minimizing $\mu$ more pronounced. At a large $\sigma$, the performance gap diminishes because the noise becomes comparable to aliasing in OMP's matching step.
\section{Conclusions}
We studied 2D-Fourier compressed sensing of a sparse signal under a strict per-channel readout budget, where only a few entries can be acquired from each row of its Fourier representation. We showed that this constraint leads to a mutual coherence lower bound that is strictly larger than the classical Welch bound, for the same total number of measurements. We then proposed a deterministic readout construction based on cyclic difference sets and Zadoff-Chu-based circular shifts, which attains the derived bound for a class of problem dimensions. Numerical results demonstrate that the proposed constructions achieve better sparse recovery than comparable feasible random subsampling schemes, particularly when the per-channel readout budget is small. 
\bibliographystyle{IEEEtran}        
\bibliography{ref.bib}

\end{document}